\newtheoremstyle{mythm}% name
{\topsep}   % Space above1
{\topsep}   % Space below1
{\itshape}      % Body font
{0pt}       % Indent amount2
{\bfseries} % Theorem head font
{:}         % Punctuation after theorem head
{5pt plus 1pt minus 1pt}    % Space after theorem head3
{\thmname{#1}\thmnumber{ #2}\thmnote{ (#3)}}
\theoremstyle{mythm}
\newtheorem{proposition}{Proposition}
\newtheorem{observation}{Observation}
\newcommand{\singlesize}{0.6}
\begin{document}

\title{Robust Beamforming for IRS-assisted Wireless Communications under Channel Uncertainty}
%\author{
%Yongchang Deng\IEEEauthorrefmark{1}, Yuze Zou\IEEEauthorrefmark{2}, Shimin Gong\IEEEauthorrefmark{1}, Bin Lyu\IEEEauthorrefmark{3}, Dinh Thai Hoang\IEEEauthorrefmark{4}, and Dusit Niyato\IEEEauthorrefmark{5}
%\\
%\IEEEauthorblockA{
%\IEEEauthorrefmark{1}School of Intelligent Systems Engineering, Sun Yat-sen University, China\\
%\IEEEauthorrefmark{2}School of Electronic Information and Communications, Huazhong University of Science and Technology, China\\
%\IEEEauthorrefmark{3}Key Laboratory of Ministry of Education in Broadband Wireless Communication and Sensor Network Technology, Nanjing University of Posts and Telecommunications, China \\
%\IEEEauthorrefmark{4}School of Electrical and Data Engineering, University of Technology Sydney, Australia\\
%\IEEEauthorrefmark{5}School of Computer Science and Engineering, Nanyang Technological University, Singapore
%}
%}
\author{
Yongchang Deng, Yuze Zou, Shimin Gong, Bin Lyu, Dinh Thai Hoang, and Dusit Niyato
\thanks{Yongchang Deng and Shimin Gong are with the School of Intelligent Systems Engineering, Sun Yat-sen University, China. Yuze Zou is with the School of Electronic Information and Communications, Huazhong University of Science and Technology, China. Bin Lyu is with the Key Laboratory of Ministry of Education in Broadband Wireless Communication and Sensor Network Technology, Nanjing University of Posts and Telecommunications, China. Dinh Thai Hoang is with the Faculty of Engineering and Information Technology, University of Technology Sydney, Australia. Dusit Niyato is with School of Computer Science and Engineering, Nanyang Technological University, Singapore. \emph{(Corresponding author: Shimin Gong)}}
}

\maketitle
\thispagestyle{empty}

\begin{abstract}
In this paper, we consider IRS-assisted transmissions from a multi-antenna access point (AP) to a receiver with uncertain channel information. By adjusting the magnitude of reflecting coefficients, the IRS can sustain its operations by harvesting energy from the AP's signal beamforming. Considering channel estimation errors, we model both the AP-IRS channel and the AP-IRS-receiver as a cascaded channel by norm-based uncertainty sets. This allows us to formulate a robust optimization problem to minimize the AP's transmit power, subject to the user's worst-case data rate requirement and the IRS's worst-case power budget constraint. Instead of using the alternating optimization method, we firstly propose a heuristic scheme to decompose the IRS's {phase shift} optimization and the AP's active beamforming. Based on semidefinite relaxations of the worst-case constraints, we further devise an iterative algorithm to optimize the AP's transmit beamforming and the magnitude of the IRS's reflecting coefficients efficiently by solving a set of semidefinite programs. Simulation results reveal that the AP requires a higher transmit power to deal with the channel uncertainty. Moreover, the negative effect of channel uncertainty can be alleviated by using a larger-size IRS.
\end{abstract}
\begin{IEEEkeywords}
Energy harvesting, robust optimization, intelligent reflecting surface, channel uncertainty
\end{IEEEkeywords}

\section{Introduction}

Recently, intelligent reflecting surface (IRS) has been proposed to enhance wireless communications by proactively reconfiguring the radio environment, i.e., wireless channels~\cite{18scm_ian1,19survey_renzo}. This provides an additional degree of freedom for system design and performance optimization of wireless networks. As such, the integration of transmission control at end devices and the online reconfiguration of wireless channels can be envisioned as a revolutionary technology for future wireless networks. The operation of IRS relies on a large array of passive scattering elements with the physical dimension equivalent to the signals' wavelength. Each scattering element can induce a controllable phase shift on the incident RF signals. All passive elements are connected and jointly controlled by the IRS controller, which can communicate with external user devices to exchange control information, e.g.,~\cite{joint_overview,18scm_ian3}.

To enhance wireless communications, the IRS controller can optimize the phase shifts of all scattering elements jointly and thus reshape the physical wireless channels between the transceivers. In contrast to the conventional transmit beamforming, this is viewed as passive beamforming due to the passive operations of the IRS's scattering elements. Recently, there is an upsurge in research works on the optimization of IRS-assisted wireless networks by leveraging the IRS's passive beamforming capability, e.g.,~channel capacity or energy efficiency maximization in~\cite{powermax-swipt,Huang2018Large}, transmit power minimization in~\cite{Wu2018Intelligent,18pbf_rui2}, and physical layer security issues in~\cite{Yu2019Enabling,wang2019eefficient}. However, most of the existing works assume perfect channel state information (CSI) and focus on one-shot or static beamforming optimization problems. This requires the capabilities of channel sensing and signal processing involving the IRS, which becomes very challenging due to the IRS's passive nature.
%The channel estimation error is generally inevitable due to processing delay and limited sampling data in a dynamic environment. This may result in severe performance loss if countermeasures are not properly taken against the channel uncertainty.
As a countermeasure, the authors in~\cite{ss_asym} modeled the channel's error estimates due to pilot contamination as Gaussian variables, and then asymptotically studied the negative effect of channel estimation errors in IRS-assisted uplink transmissions.

Instead of such a stochastic approach, another approach for modeling channel uncertainty is to impose a bounded norm on the error estimates, which generally results in robust optimization formulations to ensure the worst-case performance guarantee. The authors in~\cite{frameworkconf} considered a norm-based uncertainty set for the reflecting channels from the IRS to receivers and proposed a robust power minimization problem subject to the worst-case data rate requirements at individual users. A similar uncertainty model is applied to an IRS-assisted secure communication system in~\cite{robustsecure}, where the channel between the IRS and an eavesdropper is subject to a norm-based uncertainty set. The authors in~\cite{robustsecure} proposed a robust sum-rate maximization problem subject to the worst-case information leakage to an eavesdropper. Different from~\cite{frameworkconf}, an extended work in~\cite{robustframework} assumed that the cascaded channel from the base station to the receiver via the IRS is subject to a norm-based uncertainty set. The worst-case robust design in~\cite{robustframework} ensures that each user can have a data rate guarantee for all realizations of the channel conditions. The solutions to the above robust designs typically rely on convex reformulations of the worst-case constraints in the first place, and then use the alternating optimization (AO) method to optimize iteratively the active and passive beamforming strategies.

In this paper, we consider a similar norm-based uncertainty model for IRS-assisted multiple-input single-output (MISO) transmissions from a multi-antenna access point (AP) to a receiver. Different from~\cite{ss_asym,frameworkconf,robustsecure,robustframework}, we focus on a more practical case with both the worst-case data rate requirement at the receiver and the worst-case power budget constraint at the IRS, considering the uncertainties in both the AP-IRS and the cascaded AP-receiver channels. We propose a robust design to minimize the AP's transmit power by jointly optimizing the AP's transmit beamforming and the IRS's passive beamforming strategies. To the best of our knowledge, this is the first work to study the worst-case power budget constraint in IRS-assisted wireless networks. To solve this robust problem, we firstly propose a heuristic scheme to decompose the active and passive beamforming optimizations. We then focus on the joint optimization of the AP's transmit beamforming and the magnitude of the IRS's reflecting coefficients. By exploiting the problem structure, we design an efficient search algorithm to solve the joint optimization problem in an iterative manner. Our simulation results reveal that, with uncertain channel information, a significant increase in the AP's transmit power is required to maintain the same quality of service provisioning to the user. Another countermeasure to the negative effect of channel uncertainty is to use a larger-size IRS instead of increasing AP's transmit power. This verifies the potential benefit for improving energy efficiency by deploying IRS in wireless networks.

\section{System model}\label{sec:model}

We consider an IRS-assisted MISO downlink system as shown in Fig.~\ref{fig_system_model}. The IRS has $N$ reflecting elements and the multi-antenna AP has $M$ antennas serving one single-antenna receiver. The system model can be easily extended to a multi-user case. We assume that the number of passive reflecting elements is much larger than the number of the AP's antennas. The IRS controller is capable of adjusting the phase shift and the magnitude of each reflecting element dynamically according to the channel conditions. The joint control of phase shifts and magnitudes, namely passive beamforming, provides the capability of reshaping the physical channel conditions in favor of information transmissions from the AP to the receiver. The direct AP-receiver, AP-IRS and IRS-receiver complex channels are denoted by ${\bf g}\in \mathbb{C}^{M\times 1}$, ${\bf H}\in \mathbb{C}^{M\times N}$ and ${\bf f}\in \mathbb{C}^{N\times 1}$, respectively.

\subsection{IRS-assisted Channel Enhancement}
We assume that the IRS controller can adjust the magnitude and phase of each reflecting coefficient individually. Each reflecting element sets a phase shift $\theta_n\in[0,2\pi]$ and its magnitude $\rho_n\in[0,1]$ to reflect the incident RF signals. Let $\bm{\Theta}=\text{diag}(\rho_1 e^{j\theta_1},\ldots,\rho_N e^{j\theta_N})$ denote the IRS's passive beamforming, where $\text{diag}({\bf a})$ denoting the diagonal matrix with the diagonal vector ${\bf a}$. Hence the IRS-assisted equivalent channel from the AP to the receiver is given by
\begin{equation}\label{eq_enhanced}
\hat{\bf g}={\bf g}+{\bf H}{\bf \Theta}{\bf f},
\end{equation}
where ${\bf H}= [{\bf h}_1,\ldots,{\bf h}_N]$ denotes the channel matrix from the AP to the IRS. Let ${\bf w} \in\mathbb{C}^{M \times 1}$ denote the AP's beamforming vector and $s$ denote the complex symbol transmitted by the AP with unit power. The received signal at the receiver is given by $y=\hat{\bf g}^H{\bf w}s+\nu_d$~\footnote{Here the superscript $H$ denotes the conjugate transpose.}, where $\nu_d \sim \mathcal{CN}(0, \sigma^2)$ is the Gaussian noise with zero mean and variance $\sigma^2$. For notational convenience, we normalize the noise variance to unit one. Hence, the received signal-to-noise ratio (SNR) is given by
\begin{equation}\label{eq_SNR}
\gamma({\bf w}, {\boldsymbol{\Theta}})=\lVert ({\bf g}+{\bf H}{\bf \Theta}{\bf f})^H{\bf w}\rVert^2 .
\end{equation}
It is clear that the SNR performance depends on the AP's transmit beamforming ${\bf w}$ and the IRS's passive beamforming ${\bf \Theta}$, which are coupled in a non-convex form.

\begin{figure}[t]
	\centering
	\includegraphics[width=\singlesize\textwidth]{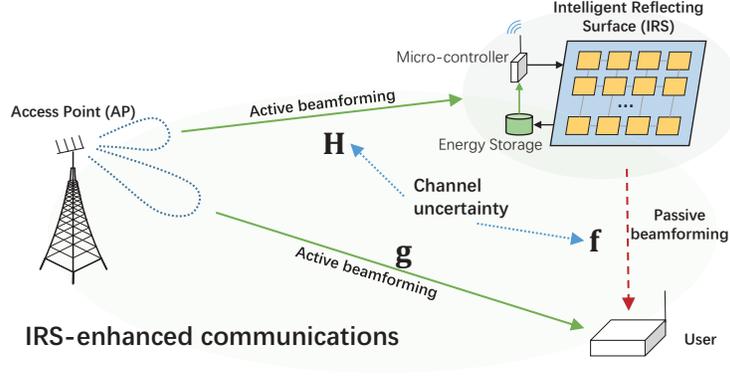}
	\caption{IRS-assisted MISO system.}\label{fig_system_model}
\end{figure}

\subsection{IRS's Power Budget Constraint}
Given the AP's transmit beamforming, the incident signal at the IRS is ${\bf x}={\bf H}^H{\bf w}s$. We assume that each reflecting element is connected to a phase controller and also equipped with an energy harvester that is able to harvest RF energy from the AP's beamforming signals. By tuning the magnitudes of the reflecting coefficients $\bm{\rho}\triangleq \{ \rho_1, \ldots, \rho_1, \ldots, \rho_N \}$, a part $\rho_n^2$ of the incident signal power is reflected to the receiver, while the other part $1-\rho_n^2$ is fed into the energy harvester.
%We assume that a power sharing model for the IRS, i.e.,~the energy harvested by individual energy harvesters are sharing among all reflecting elements.
To maintain the IRS's operations, the total harvested energy has to meet the IRS's total power consumption, which leads to the following power budget constraint:
\begin{equation*}
\eta\sum_{n} (1 - \rho_n^2) \lVert{\bf h}^H_n{\bf w}\rVert^2 \geq N\mu,
\end{equation*}
where $\eta$ denotes the energy harvesting efficiency and ${\bf h}_n$ is the channel vector from the AP to the $n$-th reflecting element. Note that the IRS's power consumption relates to the number of reflecting elements and the phase resolution~\cite{Huang2018Large}. Assuming identical bit resolutions for all reflecting elements, the IRS's total power consumption is simply denoted as $N\mu$, where $\mu$ denotes the power consumption of a single reflecting element.

We aim to minimize the AP's transmit power, denoted as $||{\bf w}||^2$, by jointly optimizing the active and passive beamforming strategies, constrained by the IRS's power budget constraint and the receiver's SNR requirement.
\begin{subequations}\label{eq_op}
\begin{align}
\min_{{\bf w}, {\boldsymbol{\Theta}}}~&~||{\bf w}||^2\\
s.t~&~|({\bf g}+ {\bf H}{\bf \Theta}{\bf f})^H{\bf w}|^2 \geq \gamma_1, \label{con_snr} \\
~&~\eta\sum_{n} (1 - \rho_n^2) |{\bf h}^H_n{\bf w}|^2 \geq N\mu,\\
~&~\rho_n\in(0,1) \text{ and } \theta_n \in (0, 2\pi) \quad \forall \, n\in\mathcal{N}.\label{con_power}
\end{align}
\end{subequations}
Note that problem~\eqref{eq_op} can be easily extended to a multi-user case by imposing individual SNR requirement for each user. %For simplicity, we focus on a single user case in the following part.

\section{Robust Active and Passive Beamforming Optimization}
It is clear that problem~\eqref{eq_op} is non-convex due to the couplings between ${\bf w}$ and ${\boldsymbol{\Theta}}$, which is conventionally solved in an AO framework. Different from~\cite{18pbf_rui2}, we allow each reflecting element to tune both the magnitude and phase of the reflecting coefficients. Moreover, the optimal solution to~\eqref{eq_op} relies on the knowledge of exact channel information, including the direct channel ${\bf g}$ and the reflected channels $({\bf H}, {\bf f})$ via the IRS, which are inevitably subject to estimation errors. In the sequel, we firstly propose a channel uncertainty model for the IRS-assisted channels, and then reformulate a robust counterpart of the power minimization problem in~\eqref{eq_op}. After that, we transform the robust counterpart to a tractable form that is the major benefit for our algorithm design.

\subsection{Channel Uncertainty Model}

We assume that the direct channel ${\bf g}$ from the HAP to the receiver can be estimated accurately by the active receiver in a training process. In particular, the HAP can send a known pilot information to the receiver with fixed transmit power. Meanwhile, the IRS switches off its reflecting elements. The channel ${\bf g}$ can be recovered at the receiver based on the received signal samples. However, without information decoding capability at the IRS, the channels ${\bf H}$ and ${\bf f}$ have to be estimated at either the AP or the receiver by overhearing the channel response. Similar to the uncertainty modeling for the IRS-receiver channel in~\cite{frameworkconf} and~\cite{robustsecure}, we assume that the AP-IRS channel matrix ${\bf H}$ is subject to estimation errors, i.e.,~${\bf H} = \bar{\bf H} + {\bf \Delta}_{\bf h}$, where $\bar{\bf H}$ denotes the averaged estimate and ${\bf \Delta}_{\bf h}$ denotes the error estimate of the channel matrix ${\bf H}$. The error estimate ${\bf \Delta}_{\bf h}$ has limited power and thus we can define the uncertainty set $\mathbb{U}_{{\bf h}}$ for ${\bf H}$ as follows:
\begin{equation}\label{equ_uncertain_h}
{\bf H} \in \mathbb{U}_{\bf h} \triangleq \{{\bf H}= \bar {\bf H} + {\bf \Delta}_{\bf h}: \textbf{Tr}({\bf \Delta}_{\bf h}^H{\bf \Delta}_{\bf h}) \leq \delta_{\bf h}^2 \},
\end{equation}
where $\textbf{Tr}(\cdot)$ denotes the trace operation and $\delta_{\bf h}$ is the power limit of error estimate ${\bf \Delta}_{\bf h}$ corresponding to the channel ${\bf H}$.

The estimation of the IRS-receiver channel ${\bf f}$ becomes more difficult as the passive IRS cannot emit RF pilot signals for channel training. As such, the estimation of the channel ${\bf f}$ has to be bundled with the AP-IRS channel ${\bf H}$ and performed at the receiver by overhearing a mixture of signals from the AP and the IRS's reflections, e.g.,~\cite{robustframework}. Define the IRS-assisted reflecting channel ${\bf H}_{\bf f}$ as follows:
\[
{\bf H}_{\bf f} \triangleq\text{diag}({\bf f}) {\bf H}= [f_1 {\bf h}_1, f_2 {\bf h}_2,\ldots, f_N{\bf h}_N],
\]
which is the cascaded channel matrix from the AP to the receiver via the IRS~\cite{robustframework}. Hence, the channel model in~\eqref{eq_enhanced} can be rewritten as $\hat {\bf g} = {\bf g} + {\bf H}_{\bf f} {\bf v}$ where ${\bf v} =[\rho_1e^{j\theta_1},\ldots,\rho_N e^{j\theta_N})]^T$ denotes the diagonal vector of the matrix $\bm{\Theta}$. Similar to~\eqref{equ_uncertain_h}, we can define the uncertainty of channel ${\bf H}_{\bf f}$ as follows:
\begin{equation}\label{equ_uncertain_f}
{\bf H}_{\bf f} \in \mathbb{U}_{\bf f} \triangleq \{{\bf H}_{\bf f}= \bar {\bf H}_{\bf f} + {\bf \Delta}_{\bf f}: \textbf{Tr}({\bf \Delta}_{\bf f}^H{\bf \Delta}_{\bf f}) \leq \delta_{\bf f}^2 \},
\end{equation}
where $\delta_{\bf f}$ denotes the power limit of the error estimate ${\bf \Delta}_{\bf f}$ for the reflecting channel ${\bf H}_{\bf f}$. The average channel estimate $\bar {\bf H}_{\bf f}$ and the power limit $\delta_{\bf f}$ are assumed to be known in advance by channel measurements.

\subsection{Robust Counterpart and Reformulations}
Given the channel uncertainty models in~\eqref{equ_uncertain_h} and~\eqref{equ_uncertain_f}, the robust counterpart of~\eqref{eq_op} can be formulated as follows:
\begin{subequations}\label{prob-wc}
\begin{align}
\min_{{\bf w}, {\bf v}}~&~||{\bf w}||^2\\
s.t~&~|({\bf g}+ {\bf H}_{\bf f}{\bf v})^H{\bf w}|^2 \geq \gamma_1,\quad \forall \, {\bf H}_{\bf f}\in\mathbb{U}_{\bf f}, \label{con1-wc}\\
~&~\eta\sum_{n} (1 - \rho_n^2) |{\bf h}^H_n{\bf w}|^2 \geq N\mu, \quad \forall\,  {\bf H}\in\mathbb{U}_{\bf h}\label{con2-wc}\\
~&~\rho_n\in(0,1) \text{ and } \theta_n \in (0, 2\pi) \quad \forall \, n\in\mathcal{N}.\label{con_power_wc}
\end{align}
\end{subequations}
Here the constraints~\eqref{con1-wc} and~\eqref{con2-wc} define the receiver's worst-case SNR requirement and the IRS's worst-case power budget constraint, respectively. To simplify problem~\eqref{prob-wc}, we assume that all the IRS's reflecting elements have the same magnitude ${\rho}$. The simplification allows us to decompose the optimization of the magnitude $\rho$ and the phase vector $\boldsymbol{\theta}\triangleq[e^{j\theta_1},\ldots,e^{j\theta_N}]^T$. In particular, we can rewrite the IRS-enhanced channel as $\hat{\bf g} = {\bf g}+ {\bf H}_{\bf f}{\bf v} = {\bf g}+{  \rho} {\bf H}_{\bf f} {\bm\theta}$. The energy harvested by the IRS is also simplified as $\eta (1 - \rho^2) ||{\bf H}^H{\bf w}||^2$. Hence, we can simplify the robust problem in~\eqref{prob-wc} as follows:
\begin{subequations}\label{prob_robust}
\begin{align}
\min_{{\bf w}, \boldsymbol{\theta},\rho}~&~\lVert {\bf w} \rVert^2\label{robust_obj}\\
s.t.~&~\lvert ({\bf g}+{  \rho} {\bf H}_{\bf f} {\bm\theta})^H{\bf w} \rvert^2 \geq \gamma_1,\quad \forall \, {\bf H}_{\bf f}\in\mathbb{U}_{\bf f},\label{SNR}\\
~&~\eta (1 - \rho^2) \lVert{\bf H}^H{\bf w}\rVert^2 \geq N\mu,\quad \forall \, {\bf H}\in\mathbb{U}_{\bf h},\label{energy}\\
~&~\rho\in(0,1) \text{ and } \theta_n \in (0, 2\pi) \quad \forall \, n\in\mathcal{N}.\label{con_power_simp}
\end{align}
\end{subequations}
%It is obvious that problem~\eqref{prob_robust} is a non-convex problem, which is difficult to solve directly.
The difficulty of problem~\eqref{prob_robust} firstly lies in that the magnitude $\rho$ is coupled with the phase vector ${\bm \theta}$ and the AP's transmit beamforming ${\bf w}$. With perfect channel conditions, the joint optimization of $({\bf w}, \boldsymbol{\theta})$ for any fixed $\rho$ can follow the conventional AO method, similar to that in~\cite{yuze}. Given the fixed ${\bf w}$, the worst-case constraints in~\eqref{SNR}-\eqref{energy} define the lower and upper bounds on $\rho$. This implies that we may resort to a linear search method to optimize $\rho$. Another difficulty comes from the semi-infinite constraints in~\eqref{SNR}-\eqref{energy}, which are required to hold for any channel error estimate in the uncertainty set. In the sequel, we first present a simple heuristic to decompose the optimization of ${\bf w}$ and ${\bm \theta}$. Then, we reformulate the worst-case constraints in~\eqref{SNR}-\eqref{energy} and present a convex approximation to problem~\eqref{prob_robust}.

\subsection{Heuristics for Optimization Decomposition}
In this work, different from the conventional AO method, we propose a new method that decomposes the joint optimization of $({\bf w}, \boldsymbol{\theta})$ and also enable us to rewrite the semi-infinite constraint~\eqref{SNR} into a linear matrix inequality. The intuition behind the decomposition lies in that we can expect that the IRS-assisted channel ${\rho} {\bf H}_{\bf f} {\bm\theta}$ in~\eqref{SNR} aligns with the direct channel ${\bf g}$ from the AP to the receiver. This happens when the direct channel ${\bf g}$ is strong under the line-of-sight (LoS) channel conditions. We present this property as follows:
\begin{observation}\label{prop_SNR}
Assuming a large-scale IRS, i.e.,~$N\gg M$, we can always find a phase vector ${\bf \theta}$ such that
\begin{equation}\label{equ_phase}
  {\bf H}_{\bf f} {\bm \theta}=\kappa {\bf g},
\end{equation}
where $\kappa \in \mathbb{R^+}$ is a scalar constant.
\end{observation}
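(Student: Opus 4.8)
The plan is to read~\eqref{equ_phase} as a feasibility question for a heavily under-determined nonlinear system and to cash in the abundance of tunable phases when $N\gg M$. Writing the columns of the cascaded channel as $a_n = f_n{\bf h}_n\in\mathbb{C}^{M\times 1}$, so that ${\bf H}_{\bf f}=[a_1,\ldots,a_N]$, and recalling $\boldsymbol{\theta}=[e^{j\theta_1},\ldots,e^{j\theta_N}]^T$, the condition ${\bf H}_{\bf f}\boldsymbol{\theta}=\kappa{\bf g}$ reads $\sum_{n=1}^N e^{j\theta_n}a_n=\kappa{\bf g}$. This is $M$ complex, i.e.~$2M$ real, equality constraints in the $N$ real phase variables $\theta_n$ together with the scalar $\kappa\in\mathbb{R}^+$. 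Since $N\gg M$, the number $N+1$ of real unknowns vastly exceeds the $2M$ constraints, so the system is heavily over-parameterized; the work is to turn this excess freedom into an actual feasible point in spite of the unit-modulus structure of each entry $e^{j\theta_n}$.

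First I would split the target into its component along ${\bf g}$ and its orthogonal complement. Let $u={\bf g}/\|{\bf g}\|$ and write each column as $a_n=(u^H a_n)\,u+a_n^{\perp}$ with $a_n^{\perp}\perp u$. Then ${\bf H}_{\bf f}\boldsymbol{\theta}=\kappa{\bf g}$ is equivalent to the pair of requirements: (i) $\sum_n e^{j\theta_n}a_n^{\perp}=0$, which annihilates the component orthogonal to ${\bf g}$; and (ii) $c\triangleq\sum_n e^{j\theta_n}(u^H a_n)$ is real and strictly positive, which then fixes $\kappa=c/\|{\bf g}\|$.

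The core step is (i): finding unit-modulus coefficients that combine the $N$ vectors $\{a_n^{\perp}\}\subset u^{\perp}\cong\mathbb{C}^{M-1}$ to the origin. Here the $N\gg M$ assumption is decisive, since the orthogonal components live in only $M-1$ complex dimensions while the number of terms is large. The reachable set $\{\sum_n e^{j\theta_n}a_n^{\perp}\}$ is compact, connected, centrally symmetric, and invariant under a common phase rotation of all $\theta_n$; a balancing (closed-polygon) argument, valid when the columns are numerous and of comparable magnitude, shows that the origin lies in this set, and one is left with roughly $N-2(M-1)$ free phases. Step (ii) is then obtained at no cost from the global-phase symmetry: given a solution of (i) with $c\neq 0$, offsetting every $\theta_n$ by the common angle $-\arg c$ preserves $\sum_n e^{j\theta_n}a_n^{\perp}=0$ while rotating $c$ to $|c|>0$, so that $\kappa=|c|/\|{\bf g}\|\in\mathbb{R}^+$. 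The fact that $\kappa$ is only required to be \emph{some} positive scalar, not a prescribed value, is exactly what makes this final alignment free.

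The main obstacle is the nonconvex unit-modulus constraint in step (i). For a plain linear system, full row rank of the fat matrix ${\bf H}_{\bf f}$ (generic when $N\gg M$) would immediately yield a solution subspace of dimension $N-M$; but pinning each coefficient to the unit circle means existence is not automatic from dimension counting. The substantive point is therefore to certify that the origin is reachable under these constraints, and this is precisely where the large reflecting array ($N\gg M$) and the angular richness of the cascaded channel ${\bf H}_{\bf f}$ enter. A degenerate channel, such as one with a single dominant column, could violate the balancing condition and break the claim; the LoS-dominated, strong-${\bf g}$ regime invoked in the text is the natural setting in which these genericity assumptions hold.
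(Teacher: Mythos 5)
Your argument is correct in spirit but takes a genuinely different, and more substantive, route than the paper. The paper's own justification is a single sentence of degrees-of-freedom counting: with $N\gg M$, the phase vector supplies ``sufficient control variables'' to solve the linear system ${\bf H}_{\bf f}\bm{\theta}=\kappa{\bf g}$ --- it treats the entries of $\bm{\theta}$ as if they were free complex unknowns and never confronts the unit-modulus constraint $|\theta_n|=1$. You correctly identify that constraint as the actual obstacle, and your decomposition into (i) annihilating the component of $\sum_n e^{j\theta_n}a_n$ orthogonal to ${\bf g}$ via a polygon-closing argument and (ii) rotating the residual scalar onto the positive real axis by a common phase offset (exploiting that $\kappa$ need only be \emph{some} positive number) is a cleaner and more honest account of why the claim should hold. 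What the paper's version buys is brevity, consistent with labeling the statement an Observation; what yours buys is an explicit identification of when the claim can fail (a dominant column violating the balancing condition), which the paper silently sweeps into the $N\gg M$ assumption. Note that your step (i) is itself only sketched --- in $\mathbb{C}^{M-1}$ the reachability of the origin under unit-modulus weights requires a genericity condition on the $\{a_n^{\perp}\}$ that you state but do not prove --- so neither argument is a complete proof, but yours locates the difficulty where it actually lives.
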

This property is easy to verify as the phase vector ${\bm \theta}$ of a large-scale IRS provides sufficient control variables to solve a set of linear equations in the form of~${\bf H}_{\bf f} {\bm \theta}=\kappa {\bf g}$ for a feasible $\kappa>0$. Though Proposition~\ref{prop_SNR} may not hold at the optimum of problem~\eqref{prob_robust} due to the coupling of ${\bf w}$ in both~\eqref{SNR} and~\eqref{energy}, it implies that the IRS can always enhance the direct channel ${\bf g}$ by its phase tuning. This may shred some light on {the} optimization decomposition of ${\bf w}$ and ${\bm \theta}$ into two sub-problems. It is clear that the phase solution to~\eqref{equ_phase} is not unique due to the large size of reflecting elements. Based on Proposition~\ref{prop_SNR}, we can simply choose ${\bm \theta}$ to maximize the channel gain $||(1+ {\rho}\kappa_m){\bf g}||^2$ of the IRS-assisted channel. This implies a bisection method to search for the maximum gain, denoted as $\kappa_m$, and the corresponding phase vector ${\bm \theta}$.
Considering the channel uncertainty model in~\eqref{equ_uncertain_f}, practically we can replace ${\bf H}_{\bf f}$ in~\eqref{equ_phase} by its mean estimate $\bar{\bf H}_{\bf f}$.

\subsubsection{Convex Reformulation for Worst-case Constraints}
Given the optimized phase vector ${\bm \theta}$ and the channel gain $\kappa_m$, now we focus on the optimization of $\rho$ and ${\bf w}$ in problem~\eqref{prob_robust}, which are closely coupled with the uncertain channel matrices ${\bf H}_{\bf f}$ and ${\bf H}$. In the sequel, we explore equivalent reformulations of the constraints~\eqref{SNR}-\eqref{energy}, respectively.
\begin{proposition}\label{prop_wc1}
Given the solution $({\bm \theta}, \kappa_m)$ to~\eqref{equ_phase}, the constraint~\eqref{SNR} is equivalent to the following matrix inequality:
\begin{align}\label{equ_cvx_snr}
		\left[\begin{matrix}
			\rho^2\left({\bm \theta}{\bm \theta}^H \otimes{\bf W}\right) + t {\bf I}_{MN} & \alpha\rho({\bm \theta} \otimes {\bf W}){\bf g}\\
			\alpha\rho {\bf g}^H ({\bm \theta}\otimes {\bf W})^H &\alpha^2{\bf g}^H{\bf W}{\bf g}-\gamma_1 - t \delta_{\bf f}^2
		\end{matrix}\right] \succeq 0,
\end{align}
for some $t\geq 0$, where $\alpha\triangleq(1+\rho \kappa_{m})$, ${\bf I}_{MN}$ denotes the identity matrix with size $MN$, and the semidefinite matrix ${\bf W}$ is a rank-one relaxation of ${\bf w}{\bf w}^H$, i.e.,~${\bf W} \succeq {\bf w}{\bf w}^H$.
\end{proposition}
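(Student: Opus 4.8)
The plan is to strip the worst-case (semi-infinite) quantifier from~\eqref{SNR} by substituting the alignment structure of Observation~\ref{prop_SNR}, reduce the resulting magnitude requirement to a single quadratic inequality in the vectorized error matrix, and then invoke the S-lemma to obtain the linear matrix inequality~\eqref{equ_cvx_snr}. First I would use the uncertainty model~\eqref{equ_uncertain_f} together with the phase-alignment identity applied to the mean estimate, $\bar{\bf H}_{\bf f}{\bm\theta}=\kappa_m{\bf g}$, to write the IRS-enhanced channel as
\[
\hat{\bf g} = {\bf g} + \rho{\bf H}_{\bf f}{\bm\theta} = (1+\rho\kappa_m){\bf g} + \rho{\bf \Delta}_{\bf f}{\bm\theta} = \alpha{\bf g} + \rho{\bf \Delta}_{\bf f}{\bm\theta}.
\]
Writing the SNR as a Hermitian form, $|\hat{\bf g}^H{\bf w}|^2=\hat{\bf g}^H{\bf W}\hat{\bf g}$ with ${\bf W}={\bf w}{\bf w}^H$, constraint~\eqref{SNR} becomes $\hat{\bf g}^H{\bf W}\hat{\bf g}\ge\gamma_1$ for every ${\bf \Delta}_{\bf f}$ in the trace ball. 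Expanding this form isolates a constant term $\alpha^2{\bf g}^H{\bf W}{\bf g}$, two cross terms linear in ${\bf \Delta}_{\bf f}$, and one term quadratic in ${\bf \Delta}_{\bf f}$.

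The second step is to vectorize. Setting $\bm{\delta}=\mathrm{vec}({\bf \Delta}_{\bf f})$, the trace bound in~\eqref{equ_uncertain_f} is exactly $\|\bm{\delta}\|^2\le\delta_{\bf f}^2$, and repeated use of $\mathrm{vec}({\bf A}{\bf X}{\bf B})=({\bf B}^T\otimes{\bf A})\mathrm{vec}({\bf X})$ turns ${\bf \Delta}_{\bf f}{\bm\theta}=({\bm\theta}^T\otimes{\bf I}_M)\bm{\delta}$. Substituting, the worst-case SNR requirement is recast as a single real quadratic inequality $\bm{\delta}^H{\bf A}\bm{\delta}+2\,\mathrm{Re}(\bm{\delta}^H{\bf b})+c\ge 0$ holding for all $\bm{\delta}$ with $\|\bm{\delta}\|^2\le\delta_{\bf f}^2$. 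Here the quadratic block collapses, through $({\bm\theta}^T\otimes{\bf I}_M)^H{\bf W}({\bm\theta}^T\otimes{\bf I}_M)={\bm\theta}{\bm\theta}^H\otimes{\bf W}$ (up to the conjugation convention chosen for $\bm{\delta}$), to ${\bf A}=\rho^2({\bm\theta}{\bm\theta}^H\otimes{\bf W})$; the cross terms assemble into ${\bf b}=\alpha\rho({\bm\theta}\otimes{\bf W}){\bf g}$; and the constant is $c=\alpha^2{\bf g}^H{\bf W}{\bf g}-\gamma_1$. Keeping the conjugates consistent between the linear and quadratic blocks is the bookkeeping-heavy part of this step.

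Finally, the statement ``$\bm{\delta}^H{\bf A}\bm{\delta}+2\,\mathrm{Re}(\bm{\delta}^H{\bf b})+c\ge 0$ whenever $\delta_{\bf f}^2-\|\bm{\delta}\|^2\ge 0$'' is precisely the premise of the (complex) S-lemma. Since $\bm{\delta}=\mathbf{0}$ satisfies the ball constraint strictly for $\delta_{\bf f}>0$ (so Slater holds) and there is a single quadratic side constraint, the S-procedure is lossless; hence the implication holds if and only if there exists $t\ge 0$ with
\[
\bm{\delta}^H({\bf A}+t{\bf I}_{MN})\bm{\delta} + 2\,\mathrm{Re}(\bm{\delta}^H{\bf b}) + (c - t\delta_{\bf f}^2) \ge 0\quad\forall\,\bm{\delta}.
\]
A quadratic form is nonnegative for all $\bm{\delta}$ exactly when its bordered matrix is positive semidefinite, which after inserting ${\bf A}$, ${\bf b}$ and $c$ is verbatim~\eqref{equ_cvx_snr}.

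I expect the main obstacle to be twofold: justifying that the S-lemma applies with equality, so that~\eqref{SNR} and~\eqref{equ_cvx_snr} are genuinely \emph{equivalent} rather than related by a one-sided sufficiency, and the Kronecker/conjugate manipulations needed to confirm that the blocks collapse into the stated ${\bm\theta}{\bm\theta}^H\otimes{\bf W}$ and $({\bm\theta}\otimes{\bf W}){\bf g}$ forms. Once the reformulation is established for the rank-one ${\bf W}={\bf w}{\bf w}^H$, replacing it by the semidefinite relaxation ${\bf W}\succeq{\bf w}{\bf w}^H$ carries the constraint into the tractable SDP framework used by the subsequent algorithm.
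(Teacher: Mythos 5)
Your proposal follows essentially the same route as the paper's proof: substitute the alignment $\bar{\bf H}_{\bf f}{\bm\theta}=\kappa_m{\bf g}$ to isolate the error term, expand the resulting quadratic in $\mathrm{vec}({\bm\Delta}_{\bf f})$ via the Kronecker/vectorization identities, and invoke the S-lemma to obtain the LMI~\eqref{equ_cvx_snr}. Your explicit check of the Slater condition (so that the S-procedure is lossless and the reformulation is a genuine equivalence) is a point the paper leaves implicit, but it does not change the argument.
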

\begin{proof}
With fixed $({\bm \theta}, \kappa_m)$, we can rewrite~\eqref{SNR} as follows:
\begin{equation}\label{robust_c1}
\lvert (1+\rho \kappa_{m}) {\bf g}^H {\bf w}+ \rho \left({\bm \Delta}_{{\bf f}} {\boldsymbol \theta}\right)^H {\bf w} \rvert^2 \geq \gamma_1, \forall \,\,{\bf H}_{\bf f}\in\mathbb{U}_{\bf f}.
\end{equation}
Let $\alpha=(1+\rho \kappa_{m})$ and ${\bf d}_{\bf f} = {\bm \Delta}_{{\bf f}} {\boldsymbol \theta}$ for notational convenience. The LHS of \eqref{robust_c1} can be represented by the following semidefinite relaxation:
\begin{align}
& \lvert (1+\rho \kappa_{m}) {\bf g}^H {\bf w}+ \rho {\bf d}_{\bf f}^H {\bf w} \rvert^2 \nonumber\\
&=\rho^2 {\bf d}_{\bf f}^H {\bf W} {\bf d}_{\bf f} + \alpha^2 {\bf g}^H {\bf W} {\bf g}  + \alpha \rho {\bf g}^H {\bf W} {\bf d}_{\bf f} + \alpha \rho {\bf d}_{\bf f}^H {\bf W} {\bf g} \nonumber \\
&=\rho^2\textbf{Tr}\left( {\bm \Delta}_{\bf f}^H {\bf W} {\bm \Delta}_{\bf f} {\bm \theta}{\bm \theta}^H \right) + \alpha^2 {\bf g}^H {\bf W} {\bf g} \nonumber\\
&\hspace{1cm} + \alpha \rho \textbf{Tr}\left({\bm \Delta}_{\bf f}^H {\bf W} {\bf g}{\bm \theta}^H\right) + \alpha\rho \textbf{Tr}\left({\bm \theta} {\bf g}^H {\bf W} {\bm \Delta}_{\bf f} \right). \label{equ_snr_trace}
\end{align}
Define $\text{vec}(\cdot)$ as the vectorization operation of a matrix by cascading all column vectors into one single vector. Thus, we have ${\bf Tr}({\bf A}^H{\bf B})=\text{vec}({\bf A})^H\text{vec}({\bf B})$ and $\text{vec}({\bf A}{\bf B}{\bf C})=({\bf C}^H\otimes{\bf A}) \text{vec}({\bf B})$, where $\otimes$ denotes the kronecker product of two matrices with proper sizes. As such, we can further rewrite~\eqref{equ_snr_trace} as follows:
\begin{equation*}
\begin{aligned}
&\lvert (1+\rho \kappa_{m}) {\bf g}^H {\bf w}+ \rho {\bf d}_{\bf f}^H {\bf w} \rvert^2\\
&=\rho^2 \text{vec}({\bm \Delta}_{\bf f})^H ( {\bm \theta}{\bm \theta}^H \otimes {\bf W} ) \text{vec}({\bm \Delta}_{\bf f}) + \alpha^2 {\bf g}^H {\bf W} {\bf g}\\
&\quad \, + \alpha \rho {\bf g}^H( {\bm \theta} \otimes {\bf W} )^H \text{vec}({\bm \Delta}_{\bf f}) + \alpha \rho \text{vec}({\bm \Delta}_{\bf f})^H ( {\bm \theta} \otimes {\bf W} ){\bf g}.
\end{aligned}	
\end{equation*}
Till now, the LHS of~\eqref{robust_c1} can be rewritten in a quadratic form, i.e.,~${\bf z}^H {\bf M}{\bf z} \geq 0$, where ${\bf z} = \left[\begin{matrix}
\text{vec}({\bm \Delta}_{\bf f}) \\
1
\end{matrix}\right]$ and the matrix coefficient ${\bf M}$ is given as follows:
\begin{equation*}
{\bf M}=\left[\begin{matrix}
		\rho^2\left({\bm \theta}{\bm \theta}^H \otimes{\bf W}\right) & \alpha\rho ({\bm \theta} \otimes {\bf W}){\bf g}\\
		\alpha\rho {\bf g}^H ({\bm \theta}\otimes {\bf W})^H &\alpha^2{\bf g}^H{\bf W}{\bf g}-\gamma_1
	\end{matrix}\right] .
\end{equation*}
It is easy to see from~\eqref{robust_c1} that ${\bf z}^H {\bf M}{\bf z} \geq 0$ holds for any ${\bm \Delta}_{\bf f}$ satisfying $\text{vec}({\bm \Delta}_{\bf f})^H\text{vec}({\bm \Delta}_{\bf f}) \leq \delta_{\bf f}^2$. By S-Lemma~\cite{quadratic}, we can always find some $t\geq 0$ such that
${\bf M}\succeq  t \left[\begin{matrix}
		-{\bf I}_{MN} & {\bf 0}\\
		{\bf 0} &  \delta_{\bf f}^2
	\end{matrix}\right]$,
which is equivalent to the matrix inequality in~\eqref{equ_cvx_snr}.
\end{proof}
Proposition~\ref{prop_wc1} transforms the worst-case constraint in~\eqref{SNR} into a semidefinite matrix inequality. Note that the matrix coefficient ${\bm \theta}{\bm \theta}^H \otimes{\bf W}$ can be further simplified as $({\bm \theta}\otimes{\bf W})( {\bm \theta}^H \otimes {\bf I}_M)$. The common term ${\bm \theta}\otimes{\bf W}$ in~\eqref{equ_cvx_snr} is linear with respect to
%in terms of
the beamforming matrix ${\bf W}$. However, the resulting constraint in~\eqref{equ_cvx_snr} is still non-convex due to the quadratic coupling between $\rho$ and ${\bf W}$. We also observe that for any fixed $\rho$ the constraint in~\eqref{equ_cvx_snr} becomes a linear matrix inequality, which is convex in terms of ${\bf W}$ and the auxiliary variable $t$.
\begin{proposition}\label{prop_energy}
Given the solution $({\bm \theta}, \kappa_m)$ to~\eqref{equ_phase}, the constraint in~\eqref{energy} has the following semidefinite reformulation:
\begin{align}\label{equ_cvx_energy}
\left[\begin{array}{cc}
	{\bf W}_c+ {\tau}{\bf I}_{MN},&  {\bf W}_c \text{vec}(\bar{\bf H})\\
	\text{vec}(\bar{\bf H})^H{\bf W}_c, &  \bar \gamma_0 -\frac{N\mu}{\eta(1-\rho^2)}-\tau {\delta_{\bf h}^2}
	\end{array}\right]	{  \succeq 0},
\end{align}
for some $\tau \geq 0$, where we define ${\bf W}_c = {\bf I}_N \otimes{\bf W}$ and $ \bar \gamma_0 =  \text{vec}(\bar{\bf H})^H{\bf W}_c \text{vec}(\bar{\bf H})$ for notational convenience.
\end{proposition}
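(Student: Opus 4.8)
The plan is to replay, almost verbatim, the template used in the proof of Proposition~\ref{prop_wc1}: reduce the semi-infinite constraint~\eqref{energy} to a single homogeneous quadratic inequality in the error estimate, and then invoke the S-Lemma. First I would note that $\rho\in(0,1)$ forces $\eta(1-\rho^2)>0$, so dividing both sides of~\eqref{energy} preserves the inequality direction and gives the equivalent requirement $\lVert{\bf H}^H{\bf w}\rVert^2 \geq \tfrac{N\mu}{\eta(1-\rho^2)}$ for all ${\bf H}\in\mathbb{U}_{\bf h}$. Using the rank-one relaxation ${\bf W}\succeq{\bf w}{\bf w}^H$, the left-hand side becomes $\lVert{\bf H}^H{\bf w}\rVert^2=\textbf{Tr}({\bf H}^H{\bf W}{\bf H})$, and substituting ${\bf H}=\bar{\bf H}+{\bm\Delta}_{\bf h}$ expands it into the four trace terms $\textbf{Tr}(\bar{\bf H}^H{\bf W}\bar{\bf H})+\textbf{Tr}({\bm\Delta}_{\bf h}^H{\bf W}{\bm\Delta}_{\bf h})+\textbf{Tr}(\bar{\bf H}^H{\bf W}{\bm\Delta}_{\bf h})+\textbf{Tr}({\bm\Delta}_{\bf h}^H{\bf W}\bar{\bf H})$.

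Next I would vectorize each term via $\textbf{Tr}({\bf A}^H{\bf B})=\text{vec}({\bf A})^H\text{vec}({\bf B})$ and $\text{vec}({\bf A}{\bf B}{\bf C})=({\bf C}^H\otimes{\bf A})\text{vec}({\bf B})$. The one genuine difference from the SNR case is structural: the harvested energy depends on the entire channel matrix through $\lVert{\bf H}^H{\bf w}\rVert^2=\sum_n|{\bf h}_n^H{\bf w}|^2$, not on a single reflected direction ${\bm\theta}$. Writing ${\bf W}{\bm\Delta}_{\bf h}={\bf W}{\bm\Delta}_{\bf h}{\bf I}_N$ and vectorizing therefore produces the full Kronecker factor ${\bf I}_N\otimes{\bf W}={\bf W}_c$, in contrast to the rank-one factor ${\bm\theta}{\bm\theta}^H\otimes{\bf W}$ appearing in~\eqref{equ_cvx_snr}. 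With ${\bf d}\triangleq\text{vec}({\bm\Delta}_{\bf h})$ this yields $\lVert{\bf H}^H{\bf w}\rVert^2={\bf d}^H{\bf W}_c{\bf d}+\bar\gamma_0+\text{vec}(\bar{\bf H})^H{\bf W}_c{\bf d}+{\bf d}^H{\bf W}_c\text{vec}(\bar{\bf H})$, where $\bar\gamma_0=\text{vec}(\bar{\bf H})^H{\bf W}_c\text{vec}(\bar{\bf H})$ matches the definition in the statement.

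I would then collect these into the homogeneous form ${\bf z}^H{\bf N}{\bf z}\geq 0$ with ${\bf z}=[\,{\bf d}^T,\,1\,]^T$ and
\[
{\bf N}=\left[\begin{matrix} {\bf W}_c & {\bf W}_c\text{vec}(\bar{\bf H})\\ \text{vec}(\bar{\bf H})^H{\bf W}_c & \bar\gamma_0-\tfrac{N\mu}{\eta(1-\rho^2)}\end{matrix}\right].
\]
The uncertainty set~\eqref{equ_uncertain_h} reads $\textbf{Tr}({\bm\Delta}_{\bf h}^H{\bm\Delta}_{\bf h})={\bf d}^H{\bf d}\leq\delta_{\bf h}^2$, i.e.\ ${\bf z}^H{\bf P}{\bf z}\geq 0$ with ${\bf P}=\text{diag}(-{\bf I}_{MN},\delta_{\bf h}^2)$. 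Requiring ${\bf z}^H{\bf N}{\bf z}\geq 0$ for every ${\bf z}$ satisfying ${\bf z}^H{\bf P}{\bf z}\geq 0$ is exactly the premise of the S-Lemma~\cite{quadratic}, which (the Slater condition holding since ${\bf d}=\bf 0$ lies strictly inside the ball) furnishes a $\tau\geq 0$ with ${\bf N}\succeq\tau{\bf P}$; rearranging ${\bf N}-\tau{\bf P}\succeq 0$ gives precisely~\eqref{equ_cvx_energy}.

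I expect the only delicate step to be the vectorization bookkeeping that produces ${\bf W}_c={\bf I}_N\otimes{\bf W}$ with the correct Kronecker ordering; everything else is a sign-checked replay of Proposition~\ref{prop_wc1}. The main thing to confirm is that the quadratic-in-${\bf W}$ coupling seen in~\eqref{equ_cvx_snr} does \emph{not} resurface here: because the energy term is linear in ${\bf W}$ once ${\bf W}_c$ is formed, the resulting inequality~\eqref{equ_cvx_energy} is a genuine linear matrix inequality, convex in $({\bf W},\tau)$ for any fixed $\rho$.
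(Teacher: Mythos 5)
Your proposal is correct and follows essentially the same route as the paper's proof: expand $\lVert{\bf H}^H{\bf w}\rVert^2$ with ${\bf H}=\bar{\bf H}+{\bm\Delta}_{\bf h}$ into four trace terms, vectorize via the Kronecker identity to obtain ${\bf W}_c={\bf I}_N\otimes{\bf W}$, and apply the S-Lemma to the resulting quadratic form in $\text{vec}({\bm\Delta}_{\bf h})$ against the ball constraint. Your added remarks (the sign of $\eta(1-\rho^2)$, the Slater point ${\bf d}={\bf 0}$, and the observation that the constraint is linear in ${\bf W}$ for fixed $\rho$) are consistent with, and slightly more careful than, the paper's argument.
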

\begin{proof}
The proof of Proposition~\ref{prop_energy} follows a similar idea {to} that for Proposition~\ref{prop_wc1} by rewriting the semi-infinite constraint~\eqref{energy} {in} a quadratic form. Let $\gamma_0 = \lVert {\bf H}^H {\bf w} \rVert^2$ and $\bar \gamma_0 = \lVert \bar {\bf H}^H {\bf w} \rVert^2$. We have the following reformulations:
\begin{equation*}
\begin{aligned}
\gamma_0 &= {\bf w}^H \left( {\bf \bar{H}} + {\bm \Delta}_{\bf h} \right) \left( {\bf \bar{H}} + {\bm \Delta}_{\bf h} \right)^H {\bf w} \\
&=  {\bf w}^H {\bm \Delta}_{\bf h} {\bm \Delta}_{\bf h}^H {\bf w} + {\bf w}^H {\bm \Delta}_{\bf h} {\bf \bar{H}}^H {\bf w} + {\bf w}^H {\bf \bar{H}}{\bm \Delta}_{\bf h}^H {\bf w} + \bar \gamma_0\\
&= {{\bf Tr}({\bm \Delta}_{\bf h}^H{\bf W}{\bm \Delta}_{\bf h})} + {\bf Tr}({\bm \Delta}_h^H{\bf W}{\bf \bar{H}} ) + {\bf Tr}({\bf \bar{H}}^H{\bf W}{\bm \Delta}_{\bf h})+  \bar \gamma_0.
\end{aligned}
\end{equation*}
Similarly, by using the trace equalities, i.e.,~${\bf Tr}({\bf A}^H{\bf B})=\text{vec}({\bf A})^H\text{vec}({\bf B})$ and $\text{vec}({\bf A}{\bf B})=\left({\bf I}_N\otimes{\bf A}\right){\bf B}$ for the matrices ${\bf A}$ and ${\bf B}$ with the dimensions $M\times M$ and $M\times N$, respectively, we can rewrite $\gamma_0 $ as follows:
\begin{align}
\gamma_0& = \text{vec}({\bm \Delta}_{\bf h})^H{\bf W}_c\text{vec}( {\bm \Delta}_{\bf h}) + \text{vec}({\bm \Delta}_{\bf h})^H{\bf W}_c{\text{vec}(\bar{\bf H})}\nonumber\\
&\hspace{2.5cm} + \text{vec}({\bf \bar{H}})^H{\bf W}_c\text{vec}({\bm \Delta}_{\bf h})+  \bar \gamma_0,
\end{align}
where ${\bf W}_c = {\bf I}_N \otimes{\bf W}$. Note that we can also write $\bar \gamma_0$ as $\bar \gamma_0 = \text{vec}(\bar{H})^H{\bf W}_c \text{vec}(\bar{\bf H})$. As such, we can reformulate the constraint \eqref{energy} into a quadratic form of the uncertain vector $\text{vec}({\bm \Delta}_{\bf h})$. Note that $\text{vec}({\bm \Delta}_{\bf h})^H \text{vec}( {\bm \Delta}_{\bf h}) = {\bf Tr}({\bf \Delta}_{\bf h}^H{\bf \Delta}_{\bf h}) \leq \delta_{\bf h}^2$. Similarly by S-Lemma, the worst-case power budget constraint~\eqref{energy} can be equivalent to~\eqref{equ_cvx_energy}.
\end{proof}

\subsubsection{Iterative Search Algorithm for $(\rho, {\bf W})$}
Till now, we can reformulate the robust power minimization problem under {the} uncertain channel conditions as follows:
\begin{equation}\label{prob_robust_1}
\min_{\rho\geq0,{\bf W}\succeq {\bf 0}, t\geq 0,\tau\geq 0}\{ {\bf Tr}({\bf W}) :\, \eqref{equ_cvx_snr} \text{ and } \eqref{equ_cvx_energy}\}
\end{equation}
The linear beamforming vector~${\bf w}$ can be retrieved by eigenvalue decomposition if the matrix solution ${\bf W}$ to problem~\eqref{prob_robust_1} is of rank one. Otherwise we can extract an approximate rank-one solution via Gaussian randomization~\cite{SDR}. However, problem~\eqref{prob_robust_1} is still non-convex due to the coupling of $\rho$ and ${\bf W}$ in the constraints \eqref{equ_cvx_snr} and \eqref{equ_cvx_energy}. For fixed $\rho$, it is easy to verify that both constraints~\eqref{equ_cvx_snr} and~\eqref{equ_cvx_energy} become linear matrix inequalities. Thus, problem~\eqref{prob_robust_1} can be efficiently solved by semidefinite programming. However, with fixed ${\bf W}$, it is still difficult to optimize $\rho$ due to to the non-convex structure in constraint~\eqref{equ_cvx_snr}. This implies that the conventional AO method does not apply to problem~\eqref{prob_robust_1} directly. In the sequel, we first exploit the structural property of problem~\eqref{prob_robust} and then devise a simple iterative algorithm to search for $(\rho, {\bf W})$.

%{\color{red}(FIND conditions for feasibility) }

\begin{proposition}\label{prop_structure}
Assuming that problem~\eqref{prob_robust} is feasible, the constraint in~\eqref{energy} always holds with equality at optimum of problem~\eqref{prob_robust}.
\end{proposition}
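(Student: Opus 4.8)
The plan is to argue by contradiction. Suppose $({\bf w}^\star,{\bm\theta}^\star,\rho^\star)$ is optimal for~\eqref{prob_robust} but the worst-case power budget constraint~\eqref{energy} holds strictly, i.e., $\eta(1-(\rho^\star)^2)\min_{{\bf H}\in\mathbb{U}_{\bf h}}\lVert{\bf H}^H{\bf w}^\star\rVert^2 > N\mu$. I would then exhibit a feasible point with strictly smaller objective $\lVert{\bf w}\rVert^2$, contradicting optimality. The construction perturbs only the scalar $\rho$ and rescales ${\bf w}$, keeping ${\bm\theta}^\star$ fixed, so the whole argument reduces to a one-dimensional continuity and monotonicity statement.

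First I would show that~\eqref{SNR} must be active at the optimum whenever~\eqref{energy} is slack. If both~\eqref{SNR} and~\eqref{energy} were strict, then replacing ${\bf w}^\star$ by $\beta{\bf w}^\star$ with $\beta<1$ close to $1$ scales the left-hand sides of both worst-case constraints by $\beta^2$ and, by continuity, keeps them feasible while strictly reducing $\lVert{\bf w}\rVert^2$ — contradicting optimality. Hence, under the standing assumption, \eqref{SNR} is tight.

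Next I would use the slack in~\eqref{energy} to move $\rho$. Since $\eta(1-\rho^2)$ multiplies a quantity $\min_{{\bf H}\in\mathbb{U}_{\bf h}}\lVert{\bf H}^H{\bf w}^\star\rVert^2$ that is independent of $\rho$, a small increment $\rho^\star\to\rho^\star+\epsilon$ (still in $(0,1)$, which is available because slackness forces $\rho^\star<1$) keeps~\eqref{energy} feasible for $\epsilon$ small. The crux is that this increment strictly relaxes the worst-case SNR. Invoking the alignment of Observation~\ref{prop_SNR}, ${\bf H}_{\bf f}{\bm\theta}^\star=\kappa_m{\bf g}$, the worst-case signal amplitude reduces to $(1+\rho\kappa_m)\lvert{\bf g}^H{\bf w}^\star\rvert-\rho\,\delta_{\bf f}\lVert{\bm\theta}^\star\rVert\,\lVert{\bf w}^\star\rVert$ (the adversarial ${\bm\Delta}_{\bf f}$ attains $\delta_{\bf f}\lVert{\bf w}^\star\rVert\lVert{\bm\theta}^\star\rVert$), whose $\rho$-derivative is $\kappa_m\lvert{\bf g}^H{\bf w}^\star\rvert-\delta_{\bf f}\lVert{\bm\theta}^\star\rVert\lVert{\bf w}^\star\rVert$; in the strong-direct-channel (LoS) regime underlying the decomposition this is strictly positive, so the worst-case SNR strictly increases and~\eqref{SNR} acquires strict slack at $(\rho^\star+\epsilon,{\bf w}^\star)$.

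Finally, with both constraints strict at $(\rho^\star+\epsilon,{\bf w}^\star)$, I would rescale ${\bf w}^\star\to\beta{\bf w}^\star$ and take the smallest $\beta<1$ that restores equality in~\eqref{SNR}, which strictly lowers $\lVert{\bf w}\rVert^2$. It remains to verify that the rescaled point still satisfies~\eqref{energy}: as $\epsilon\to0$ the SNR slack vanishes so $\beta\to1$, while the energy value at $(\rho^\star+\epsilon,{\bf w}^\star)$ tends to the original strictly positive margin; hence for $\epsilon$ small enough $\beta^2$ times the energy value stays above $N\mu$ by continuity. This produces a strictly better feasible solution, the desired contradiction. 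The main obstacle is the monotonicity claim of the previous paragraph — that increasing $\rho$ strictly improves the worst-case SNR — which is clean under Observation~\ref{prop_SNR} but needs the direct-channel term to dominate the uncertainty term; the knife-edge case where that derivative vanishes (the worst-case SNR being locally insensitive to $\rho$) would require a separate, more delicate perturbation of ${\bf w}$, and this is where I expect the argument to demand the most care.
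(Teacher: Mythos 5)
Your argument follows essentially the same route as the paper's: a proof by contradiction with two cases, scaling ${\bf w}$ down by a factor less than one when both worst-case constraints are slack, and otherwise nudging $\rho$ upward (exploiting the slack in~\eqref{energy}) to create slack in~\eqref{SNR} and reduce to the first case. The monotonicity issue you flag---that increasing $\rho$ must strictly relax the worst-case SNR, which fails when $\kappa_m\lvert{\bf g}^H{\bf w}\rvert \le \delta_{\bf f}\lVert{\bm\theta}\rVert\lVert{\bf w}\rVert$---is precisely the step the paper asserts without justification (it simply declares that \eqref{SNR} and \eqref{energy} define lower and upper bounds on $\rho$ for fixed ${\bf w}$), so your explicit worst-case computation and your caveat identify a gap that is present in the paper's own proof as well.
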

\begin{proof}
Considering a non-trivial case when the feasible region of $\rho$ is non-empty, it is clear that the constraints in~\eqref{SNR} and~\eqref{energy} define the lower and upper bounds on $\rho$, for any fixed ${\bf w}$, {respectively}. It is easy to verify that this relationship also holds in~\eqref{equ_cvx_snr} and~\eqref{equ_cvx_energy}. In the following, we prove this proposition by contradiction. In particular, we assume that $(\rho, {\bf w})$ is optimal to problem~\eqref{prob_robust} and the constraint~\eqref{energy} holds with strict inequality. Then, we have different strategies to improve the objective as follows:
\begin{enumerate}
  \item If~\eqref{SNR} holds with strict inequality, it is clear that we can properly scale ${\bf w}$ by a scalar factor $s<1$ such that both constraints in~\eqref{SNR} and~\eqref{energy} still hold. The scaling $s{\bf w}$ apparently leads to a better objective in~\eqref{robust_obj}, which becomes contradictory to our assumption.
  \item If~\eqref{SNR} holds with equality, we can still construct a new solution leading to a reduced transmit power. In this case, the inequality~\eqref{SNR} always holds when we increase $\rho$. Let $\rho_{\min}$ and $\rho_{\max}$ denote the lower and upper bounds of $\rho$ with the fixed ${\bf w}$, {respectively}. Hence, we have $\lvert ({\bf g}+{  \rho_{\min}} {\bf H}_{\bf f} {\bm\theta})^H{\bf w} \rvert^2 = \gamma_1$ and $\eta (1 - \rho^2_{\max}) \lVert{\bf H}^H{\bf w}\rVert^2 =  N\mu$ in the worst-case channel conditions. We can simply set $\rho_{m} = (\rho_{\min}+\rho_{\max})/2$ to ensure strict inequalities in~\eqref{SNR}-\eqref{energy}. This becomes exactly the first case, in which we can scaling down ${\bf w}$ and achieve a reduced transmit power.
      %Specifically, the scaling factor can be set as follows:
%      \[
%      s = \min\left\{\frac{\gamma_1}{\lvert ({\bf g}+{  \rho_{m}} {\bf H}_{\bf f} {\bm\theta})^H{\bf w} \rvert^2}, \frac{N\mu}{(1 - \rho^2_{m}) \lVert{\bf H}^H{\bf w}\rVert^2}\right\}
%      \]
\end{enumerate}
It is clear that in any of the above two cases we can improve the objective function, which implies that the constraint in~\eqref{energy} holds with equality at optimum.
\end{proof}

Proposition~\ref{prop_structure} implies a simple iterative solution method for problem~\eqref{prob_robust_1}. Specifically, the algorithm starts with a feasible $\rho$, the optimal transmit beamforming ${\bf W}$ can be efficiently optimized by solving the semidefinite program in~\eqref{prob_robust_1}. According to Proposition~\ref{prop_structure}, we can simply update $\rho$ by its maximum defined by the constraint in~\eqref{equ_cvx_energy}. The detailed procedures are listed in Algorithm~\ref{alg_iterative}. The algorithm terminates when the AP's transmit power becomes stabilized.

\begin{algorithm}
	\caption{The Max-$\rho$ Algorithm for Problem~\eqref{prob_robust_1}}\label{alg_iterative}
	\begin{algorithmic}[1]
		\STATE Initial with small $\rho$, $k\leftarrow 1$, $\epsilon\leftarrow 10^{-5}$
		\STATE \textbf{while} $\lvert {\bf Tr}({\bf W}^{(k)})   -  {\bf Tr}( {\bf W}^{(k-1)}) \rvert \geq \epsilon$
		\STATE \hspace{5mm} $k\leftarrow k+1$
		\STATE \hspace{5mm} Find $\kappa_m$ and ${\bm \theta}$ via bisection
		\STATE \hspace{5mm} Solve \eqref{prob_robust_1} with given $\rho$
		\STATE \hspace{5mm} Retrieve ${\bf W}$ and update ${\bf W}^{(k)}\leftarrow {\bf W}$
        \STATE \hspace{5mm} Evaluate the upper bound $\rho_{\max}$
		\STATE \hspace{5mm} Update $\rho\leftarrow \rho_{\max}$
		\STATE \textbf{end while}
	\end{algorithmic}
\end{algorithm}

\section{Numerical Results}

\begin{figure}[b]
	\centering
	\includegraphics[width=\singlesize\textwidth]{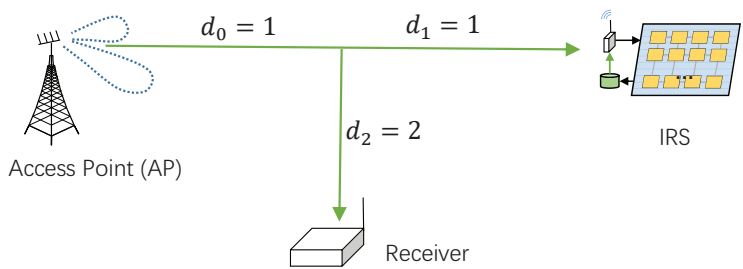}
	\caption{Simulation topology}\label{fig-topo}
\end{figure}
In the simulation, we evaluate the AP's transmit power with different SNR requirements at the receiver. The impact of channel uncertainty on the AP's minimum transmit power is also examined under different parameter settings. Specifically, we consider the AP with $2-4$ antennas and the IRS with $20-100$ reflecting elements. We consider a fixed topology shown in Fig.~\ref{fig-topo} to verify the proposed algorithm. The path loss follows a log-distance propagation model with the loss exponent equal to 2. The path loss at 1 meter distance is 30~dB. To characterize the level of channel uncertainty, we define \emph{uncertainty factors} as $\beta_{\bf h} \triangleq \delta_{\bf h}^2 / \textbf{Tr}(\bar{\bf H}\bar{\bf H}^H)$ and $\beta_{\bf f} \triangleq \delta_{\bf f}^2 / \textbf{Tr}(\bar{\bf H}_{\bf f}\bar{\bf H}_{\bf f}^H)$, respectively for the uncertain channels ${\bf H}$ and ${\bf H}_{\bf f}$. For simplicity, we consider $\beta = \beta_{\bf h} = \beta_{\bf f}$ in the simulation. A larger $\beta$ implies a higher variation of the channel conditions and thus larger errors in channel estimation.

In the following, we firstly verify the convergence of the proposed Max-$\rho$ Algorithm and explain its efficacy. Secondly, we conduct a set of experiments to study the impact of different parameters on the AP's minimum transmit power, including a) the uncertainty factor, b) the size of the IRS, c) the SNR requirement at the receiver, and d) the number of AP's antennas. For each simulation setting, we run the experiment 10 times with randomly generated channel conditions and record the averaged performance for a fair comparison. In Fig.~\ref{fig-converge}, we show the convergence of the AP's transmit power and the IRS's magnitude $\rho$ of reflecting coefficients in Algorithm~\ref{alg_iterative}. The magnitude $\rho$ is also called a power-splitting (PS) ratio. We consider $M=2$ antennas at the AP and $N=20$ reflecting elements in the IRS. The SNR requirement at the receiver is set to $\gamma_1=30$~dB and the uncertainty factor is set to $\beta=0.1$. It is clear that initially the AP operates with a large transmit power to ensure the fulfillment of the worst-case SNR requirement at the receiver and the worst-case power budget constraint at the IRS. As the algorithm iterates, the AP's transmit power decreases significantly meanwhile the PS ratio $\rho$ increases to reflect more RF power to the receiver. By dynamically adjusting the operating parameters at both the AP and the IRS in an alternating manner, the AP can tune down its transmit power gradually while still maintaining the desired service provisioning requirement. We can see from Fig.~\ref{fig-converge} that Algorithm~\ref{alg_iterative} converges quickly in around 20 iterations, which verifies the effectiveness of our algorithm design.

\begin{figure}[t]
	\centering
	\includegraphics[width=\singlesize\textwidth]{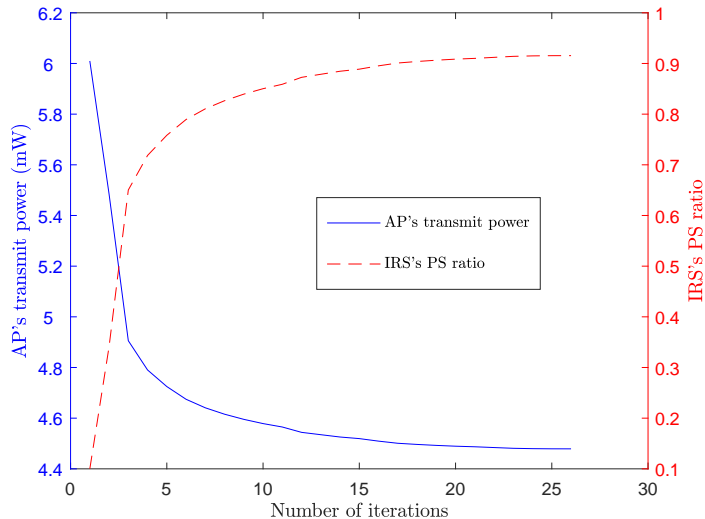}
	\caption{Convergence of Algorithm~\ref{alg_iterative} in a few iterations.}\label{fig-converge}
\end{figure}

\begin{figure}[t]
	\centering
	\includegraphics[width=\singlesize\textwidth]{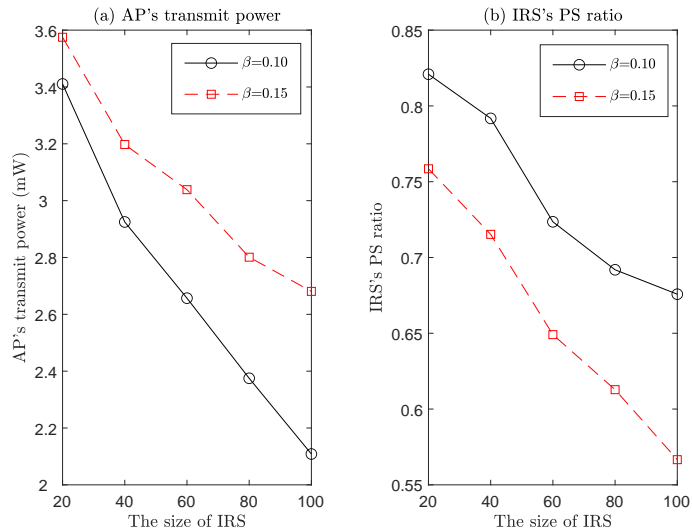}
	\caption{Performance impact of uncertainty factor.}\label{fig-uf}
\end{figure}

\begin{figure}[t]
	\centering
	\includegraphics[width=\singlesize\textwidth]{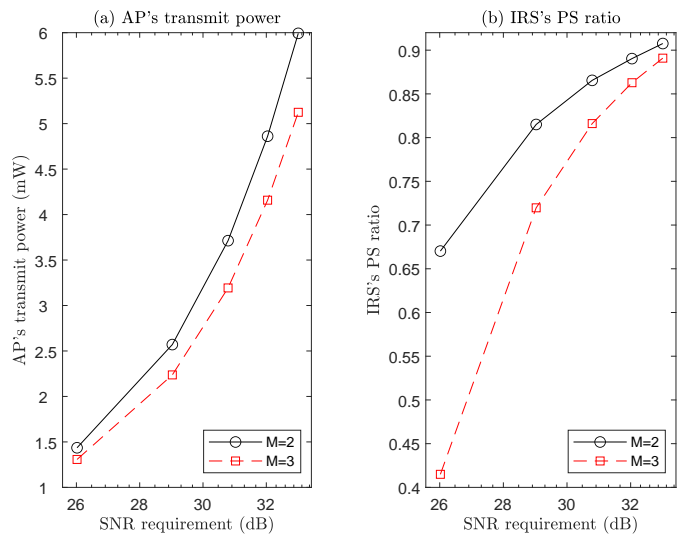}
	\caption{Performance impact of user's SNR requirement.}\label{fig-snr}
\end{figure}

\begin{figure}[t]
	\centering
	\includegraphics[width=\singlesize\textwidth]{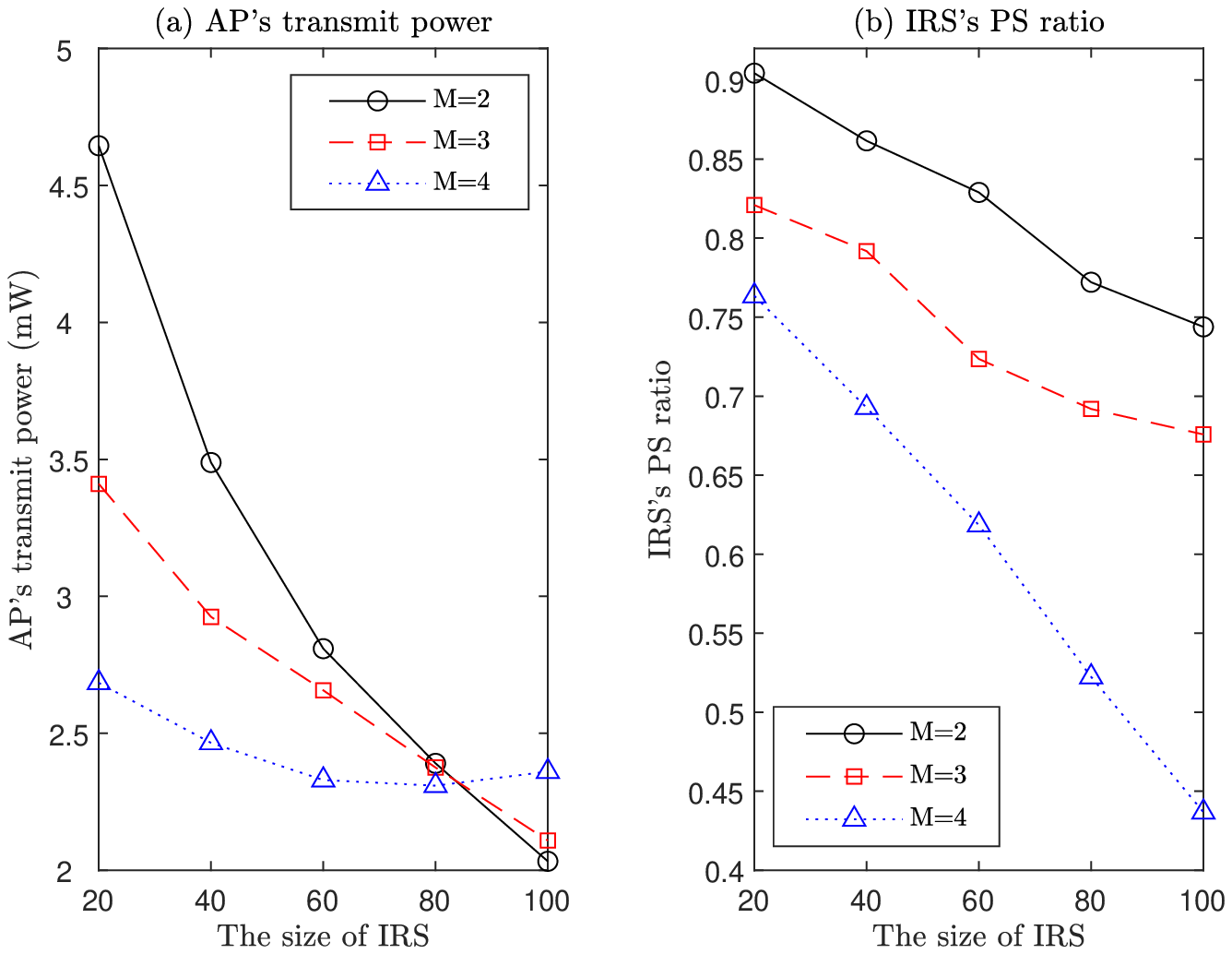}
	\caption{Performance impact of the number of AP's antennas.}\label{fig-antenna}
\end{figure}

In Fig.~\ref{fig-uf}, we evaluate the impact of the uncertainty factor $\beta$ on the AP's transmit power. The uncertainty factors are set to $\beta=0.1$ and $\beta=0.15$, respectively. The number of the IRS's reflecting elements varies from 20 to 100. As shown in Fig.~\ref{fig-uf}(a), with a fixed size of the IRS, the AP's transmit power becomes higher with a larger uncertainty level. This observation is intuitive since the AP needs to raise its transmit power to ensure the worse worst-case data rate guarantee at the receiver. This can be viewed as the price for robustness. Besides, as shown in Fig.~\ref{fig-uf}(b), when the uncertainty factor $\beta$ becomes large, the IRS prefers to reserve more energy by setting a smaller value of $\rho$ to maintain its operations even in the worst-case channel conditions. Fig.~\ref{fig-uf} also verifies that a large-size IRS can provide more performance improvement. In particular, the AP's transmit power decreases as the number of reflecting elements increases. A larger size of IRS indicates more channel diversity that can be exploited by the AP to enhance the user's data rate. This means that we can maintain the same service provisioning with reduced transmit power at the AP. On the other hand, a larger size of IRS also implies more energy consumption, which requires more energy harvesting from the AP's signal beamforming by setting a smaller value of $\rho$, as shown in Fig.~\ref{fig-uf}(b).

We further exam the AP's transmit power by varying the receiver's SNR requirement. We fix the number of reflecting elements at $N=50$ and the uncertainty factor at $\beta=0.1$. As shown in Fig.~\ref{fig-snr}(a), the AP's minimum transmit power grows with the increasing of the user's SNR requirement. However, the PS ratio $\rho$ increases much slower as shown in Fig.~\ref{fig-snr}(b). The IRS's PS ratio can be jointly tuned up to enhance the information transmission when the user's SNR requirement becomes more stringent. In Fig.~\ref{fig-antenna}, we investigate the performance impact of the number of AP's antennas and the size of the IRS. The uncertainty factor is fixed at $\beta=0.1$. As shown in the Fig.~\ref{fig-antenna}(a), the AP's transmit power decreases with the increasing in the size of IRS, which is consistent with the results in Fig.~\ref{fig-uf}. We also record an interesting observation in Fig.~\ref{fig-antenna}(a) where different curves intersect with each. This implies that a larger size of the IRS and more antennas at the AP do not guarantee a better performance under channel uncertainties. The reason is that a large-size IRS and more antennas at the AP may exaggerate the uncertainty in the cascaded AP-IRS-receiver channel, and thus the price of robustness also becomes higher. Moreover, the IRS's power budget constraint also limits the feasible size of IRS in practice. These observations provide useful insights for practical deployment of IRS with imperfect channel and energy conditions.

\section{Conclusions}
In this paper, we have proposed a robust power minimization problem by jointly optimizing the active and passive beamforming strategies, subject to both the worst-case data rate requirement at the receiver and the worst-case power budget constraint at the IRS. We have reformulated the worst-case
constraints into matrix inequalities and devised a novel iterative algorithm to search for the AP's transmit beamforming and the IRS's power-splitting ratio. Our simulation results reveal that the price of robustness is inevitable, however it can be alleviated by using a larger-size IRS. In our future work, we may focus on a quantitative study on the relationship between the size of IRS and the price of robustness.

\bibliographystyle{IEEEtran}

\bibliography{robust-irs}

\end{document}